\newtheorem{lemma}{Lemma}
\DeclareMathOperator*{\argmax}{arg\,max}
\theoremstyle{definition}
\newtheorem{definition}{Definition}
\thanks{*Equal contribution}
\thanks{$^{1}$Ayhan Alp Aydeniz and Kagan Tumer are with Collaborative Robotics and Intelligent Systems Institute, Oregon State University,
         Corvallis, Oregon, USA
         {\tt\small \{aydeniza, kagan.tumer\}@oregonstate.edu}}%
\thanks{$^{2}$Enrico Marchesini is with Laboratory for Information \& Decision Systems, Massachusetts Institute of Technology, Cambridge, Massachusetts, USA
         {\tt\small e.marchesini@mit.edu}}%
\thanks{$^{3}$Robert Loftin is with the Department of Computer Science, University of Sheffield, Sheffield, United Kingdom
         {\tt\small r.loftin@sheffield.ac.uk}}
\thanks{$^{4}$Christopher Amato is with Khoury College of Computer Sciences, Northeastern University, Boston, Massachusetts, USA
         {\tt\small c.amato@northeastern.edu}}
\title[]{Safe Multiagent Coordination via Entropic Exploration}
\author{Ayhan Alp Aydeniz$^{1*}$, Enrico Marchesini$^{2*}$, Robert Loftin$^{3}$, Christopher Amato$^{4}$, and Kagan Tumer$^{1}$} 
\begin{abstract}
Many real-world multiagent learning problems involve safety concerns. In these setups, typical safe reinforcement learning algorithms constrain agents' behavior, limiting exploration---a crucial component for discovering effective cooperative multiagent behaviors. Moreover, the multiagent literature typically models individual constraints for each agent and has yet to investigate the benefits of using joint (\textit{team}) constraints. In this work, we analyze these team constraints from a theoretical and practical perspective and propose \textit{entropic exploration} for constrained multiagent reinforcement learning (E2C) to address the exploration issue. E2C leverages observation entropy maximization to incentivize exploration and facilitate learning safe and effective cooperative behaviors. Experiments across increasingly complex domains show that E2C agents match or surpass common unconstrained and constrained baselines in task performance while reducing unsafe behaviors by up to $50\%$.

\end{abstract}
\keywords{Multiagent reinforcement learning, Safety, Entropy maximization, Constrained reinforcement learning}
\newcommand{\BibTeX}{\rm B\kern-.05em{\sc i\kern-.025em b}\kern-.08em\TeX}
\begin{document}


\pagestyle{fancy}
\fancyhead{}


\maketitle 


\section{Introduction}
\label{sec:introduction}

Training agents to operate in real-world scenarios requires adhering to safety specifications. This is particularly challenging in multiagent environments (\textit{e.g.,} search-and-rescue missions~\cite{mr_searchrescue} and remote exploration tasks~\cite{hu2020voronoi}), where agents must discover highly coordinated behaviors while satisfying the safety requirements. However, learning these behaviors is difficult as it requires optimizing for (at least) two objectives---a team objective aiming to solve the task and the safety objective(s). Hence, extending single-agent safe reinforcement learning (RL) algorithms to multiagent settings often fails to cope with the dynamics and challenges of cooperative systems.

In particular, prior safe RL work has explored constrained algorithms using trust region-based methods to match the safety specifications~\cite{srinivasan2020learning, thananjeyan2021recovery, thananjeyan2020safety, achiam2017constrained, constrained_behaviors}. These constrained RL methods aim to maximize task performance---modeled as reward signal(s)---while adhering to safety specifications in the form of constraints. Despite being effective in single-agent settings, their extensions to multiagent RL (MARL) model the problem by defining separate (\textit{individual}) constraints for each agent~\cite{constrained_mappo, constrained_mappo2}. However, we argue that safety in cooperative MARL is inherently a team-level concept (\textit{e.g.,} a collision between two agents can cause the entire team to fail). Additionally, introducing constraints inherently limits exploration, which is critical for avoiding local optima~\cite{diversity_exploration}. Without effective exploration, constrained algorithms tend to prioritize safety at the expense of learning behaviors that achieve strong task performance. This detrimental trade-off further exacerbates in multiagent cooperative systems, where exploration is crucial for discovering joint behaviors that are necessary to solve a task~\cite{mappo, maddpg, liir, qmix}. To improve discovery of such behaviors, constrained MARL often considers policy entropy in the optimization process~\cite{ha2020learning, yang2023reinforcement}. However, we note that this technique increases the ``randomness" of policy decisions and could further hinder constraint satisfaction and the team's performance in complex tasks.
\begin{figure}[t]
    \centering
    \includegraphics[width=0.8\linewidth]{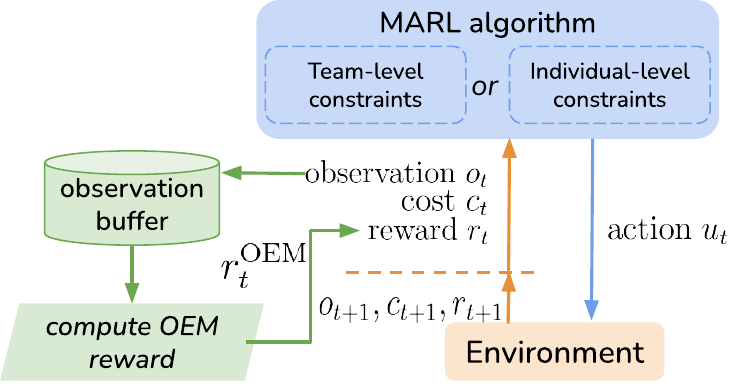}
    \caption{Explanatory overview of an E2C algorithm.}
    \label{fig:E2C_overview}
    \vspace{-5pt}
\end{figure}

In this paper, we address these issues by first analyzing the impact of defining team constraints from a theoretical perspective. Then, we introduce entropic exploration for constrained MARL (E2C) to enhance exploration and learn safe behaviors with good team performance for cooperative agents (Figure \ref{fig:E2C_overview}). E2C employs observation entropy maximization (OEM)~\cite{seo2021state, liu2021behavior, bellemare2016unifying} to balance the team objective of agents and the safety objective in the system without employing policy entropy. Specifically, E2C leverages a \textit{count-based} and \textit{k-nearest neighbor (knn)} approximations to estimate the observation entropy and reward the agents. Finally, we apply E2C to a constrained MARL algorithm with both individual and team constraints to analyze their practical impact. The contributions of this work are to:

\begin{itemize} 
    \item Introduce team constraints for cooperative agents, providing a lower bound on policy improvement and showing their practical impact on performance. 
    \item Propose a novel constrained MARL algorithm leveraging entropy maximization to balance task performance and safety constraints effectively.
\end{itemize}

Our experiments show the efficacy of E2C in addressing safety and cooperation in a variety of multiagent setups. We evaluate our method across six well-known MARL domains, including variations of the \textit{cooperative multi-rover exploration} domain~\cite{agogino2004unifying}, three coordination environments from the \textit{multiagent particle suite}~\cite{maddpg}, and two multiagent locomotion tasks from the \textit{safe MaMuJoCo} problems~\cite{safe_mamujoco}.
Notably, E2C algorithms achieve superior or comparable task performance to baseline unconstrained MARL algorithms, while satisfying constraints. Our method also successfully satisfies constraints and learns high-payoff behaviors in complex coordination tasks where previous constrained MARL baselines fail. 
\section{Preliminaries and Related Work}
\label{sec:preliminaries}

Cooperative multiagent tasks can be modeled as a \textit{decentralized Markov decision processes} (DecMDPs)~\cite{decpomdp}. We represent a Dec-MDP with a tuple $\langle \mathcal{N}, \mathcal{S}, \mathcal{U}, T, r, O, \gamma \rangle$, where $\mathcal{N}$ is a finite set of agents; $\mathcal{S}$ is the set of states of the environment; $\mathcal{U} = \{U^i\}_{i \in \mathcal{N}}$ is the set of all possible joint actions. At a time step, $t$, each agent $i$ selects an action, $u_t^i$, forming joint action $\bm{u} = \{u_t^i\}_{i \in \mathcal{N}}$, which transitions the environment from state $s_t$ to $s_{t+1}$ via $T(s_t, \bm{u}_t, s_{t+1}) = P(s_{t+1}|s_t, \bm{u}_t)$ and yields a joint reward $r(s_t, \bm{u}_t)$. In a DecMDP, each agent $i$ receives an observation $o_t^i$ according to an observation transition function $O(\bm{o}_t, s_t, \bm{u}_t) = P(\bm{o}_t | s_t, \bm{u}_t)$. This function defines the probability distribution over the joint observations $\bm{o}_t = \{o_t^1, \dots, o_t^n \}$ based on the state and joint action. In a DecMDP, the state at each time step can be uniquely determined by the joint observation. The objective of a cooperative team of agents is to learn a joint policy $\pi(\bm{u}_t|s_t)$ that maximizes the expected discounted return defined as: 
\begin{equation}
J_r(\pi) := \mathbb{E}_{\pi} \left[\sum_{t=0}^{\infty}\gamma^t r(s_t, \bm{u}_t) \right],
\end{equation}
\noindent where $\gamma \in [0, 1)$ is a discount factor.

\subsection{Multiagent Reinforcement Learning}
\label{sec:background_marl}
Motivated by the promises of collaborative multiagent systems, MARL has received significant research attention.  A popular approach is the \textit{centralized training with decentralized execution (CTDE)} paradigm, which centralizes information during training while maintaining decentralized execution \citep{mappo, coma, qplex}.
In discrete action spaces, value factorization techniques have been used to estimate a joint value function as a global optimization signal~\cite{qmix, qplex, gdq} in a CTDE fashion. However, these algorithms are unsuitable for multiagent systems with continuous control.

To address this limitation, extensions of well-known single-agent algorithms to MARL (\textit{e.g.}, proximal policy optimization (PPO)~\cite{ppo} to multiagent PPO (MAPPO)~\cite{mappo}) have shown promising performance in cooperative games with both discrete and continuous action spaces, by estimating centralized value functions~\cite{mappo, maddpg, coma}. Crucially, the centralized component in these approaches is required only during training, ensuring a principled CTDE method. 

Given MAPPO's strong performance in decentralized multiagent problems, we build E2C on top of it. Next, we discuss how to incorporate safety specifications within this learning paradigm.

\subsubsection{Constrained Reinforcement Learning}
\label{sec:constrained_rl}
Constrained RL has successfully modeled safety specifications for single-agent tasks~\cite{safetygym, constrained_behaviors, spaan_cem}. These constrained methods typically extend trust region-based algorithms by incorporating additional sets of cost functions, denoted as $\mathcal{C} := \{c_j^i\}_{j \in m^i}^{i \in \mathcal{N}}$, where each agent $i$ has $m^i$ cost functions. Each cost function is of the form $c_j^i: \mathcal{S} \times U^i \rightarrow \{0, 1\}$ with corresponding hard-coded cost-limiting values (\textit{i.e.}, thresholds) $\mathbf{l} := \{l_j^i\}_{j \in m^i}^{i \in \mathcal{N}}$. After performing the joint action in the environment, each agent $i$ also receives costs $c_j^i(s_t, u_t^i)~\forall j = 1, \dots, m^i.$ In addition to maximizing the expected discounted return, the agent aims to satisfy its safety constraints, defined as:
\begin{equation}
J_j^i(\pi) := \mathbb{E}_{\pi} \left[\sum_{t=0}^{\infty}\gamma^t c_j^i(s_t, u_t^i) \right] \leq l_j^i,~\forall j = 1, \dots, m^i.
\end{equation}
Constraint-satisfying (feasible) policies $\Pi_\mathcal{C}$ (where $\Pi$ are stationary policies), and optimal policies $\pi^*$ are thus defined as:
\begin{equation}
\begin{split}
    &\Pi_\mathcal{C} := \{\pi \in \Pi : J_j^i(\pi) \leq l_j^i,~\forall i \in \mathcal{N}, j = 1, \dots, m^i\}, \\
    &\pi^*= \argmax_{\pi \in \Pi_\mathcal{C}} J_r(\pi).
\label{eq:id_optimum}
\end{split}
\end{equation} 
However, enforcing strict thresholds leads to a significant performance trade-off due to the detrimental effects of constraints on exploration~\cite{AAAI, safetygym, spaan_cem}. Such drawbacks are further exacerbated in multiagent settings that we discuss in the following.

\subsubsection{Constrained Multiagent Reinforcement Learning}
In cooperative MARL, good exploration is pivotal to learning a joint policy that can successfully solve a cooperative task. In particular, introducing constraints in MARL typically leads to having three competing objectives: agent-specific behaviors (\textit{e.g.,} learning how to navigate in an environment), joint task performance (\textit{e.g.,} cooperating to rescue a target), and constraints (\textit{e.g.,} avoiding collisions). 


Despite its importance, constrained MARL has received marginal research attention~\cite{constrained_mappo, constrained_mappo2, constrained_mappo3} and there is much room to develop novel safe MARL algorithms. For example, the works~\cite{constrained_mappo2, constrained_mappo3} propose approaches voted to improve cost-value estimation and credit assignment. However, these methods extend the single-agent case by: (i) using individual constraints for each agent, (ii) disregarding the negative impact of constraints and policy entropy used during optimization by the algorithms to learn cooperative behaviors. We thus analyze the benefit of incorporating safety specifications at a team level from a theoretical and practical perspective. On the theory side, we extend the work of~\citet{constrained_mappo}, deriving cost improvement bounds for trust-region-based methods (on which MAPPO builds~\cite{cmarl_bounds}) using team constraints. On the practical side, we conduct an extensive evaluation of constrained MAPPO employing individual and team constraints. Moreover, to tackle the limited exploration of constrained algorithms and the potential issues of employing policy entropy, by integrating observation entropy maximization for which we provide a brief overview in the following.

\subsection{Entropy Maximization}
\label{sec:ent_max_in_RL}
A popular technique to improve exploration in RL algorithms is to generate diversity that enables the policy search to cover larger areas in the policy space. Many RL algorithms achieves this search by maximizing the policy entropy~\cite{eysenbach2018diversity, haarnoja2018soft, ziebart2008maximum, ppo} within their optimization processes or as their objective. This entropy maximization naturally increases stochasticity while agents take actions and has been used within constrained RL frameworks~\cite{ha2020learning, yang2023reinforcement}. In this work, we show that maximizing policy entropy is not a good practice when cooperative agents are under strict safety requirements. 

Conversely, observation (or state) entropy maximization (OEM) is used to incentivize visiting new states by rewarding agents based on the novelty of their observations. To this end, designing a count-based reward has been a popular approach for mapping the agent's decisions to the entropy measured over the observed states~\cite{bellemare2016unifying, badia2020never, tang2017exploration, aydeniz2023novelty}. In a similar direction, \textit{k-nearest neighbor} (\textit{knn}) estimates of entropy~\cite{beirlant1997nonparametric, singh2003nearest} have been used to have a uniform distribution of agents' observations~\cite{seo2021state, liu2021behavior, aydeniz2023entropy}. The common practice of maximizing observation entropy is to design rewards promoting the visitations to unique states. Crucially, these entropy-based rewards impact the magnitude of gradients and not their direction, potentially addressing the issues of using policy entropy in constrained setups. However, to our knowledge, OEM has yet to be explored as a practical way to address the exploration issues of constrained MARL setups.

\section{Trust Region Bounds for Team Constraints}
\label{sec:team_theory}

In this section, we extend the cost improvement bounds derived by the works~\cite{cmarl_bounds, constrained_mappo} for trust region MARL with individual constraints to the team settings. In particular, we note that the work \cite{constrained_mappo} considers a Markov game, assuming fully cooperative agents with a joint reward, and \citet{cmarl_bounds} relies on a Dec-MDP formalization assuming local observations capture ``sufficient" information about the state.\footnote{When their state assumption does not hold, authors assume that agents using recurrent networks as decentralized policies can overcome partial observability.} Hence, we follow the same assumptions of such previous works and extend their stateful lower bound on the cost improvement to team constraints.

When cooperative agents use joint (team) constraints, we define a set of cost functions $\mathcal{C} := \{c_j\}_{j \in m}$ (the team has $m$ cost functions). These functions take the form $c_j: \mathcal{S} \times \mathcal{U} \rightarrow \{0, 1\}$ with cost-limiting values $\bm{l} := \{l_j\}_{j \in m}.$ After performing the joint action in the environment, the agents receive joint costs $c_j(s_t, \bm{u}_t)~\forall j = 1, \dots, m.$ On top of maximizing the expected discounted return, the agents now also try to satisfy a joint constrained objective:
\begin{equation}
    J_j(\pi) := \mathbb{E}_{\pi} \left[\sum_{t=0}^{\infty}\gamma^t c_j(s_t, \bm{u}_t) \right] \leq l_j,~\forall j = 1, \dots, m,
\label{eq:team_optimum}
\end{equation}
\noindent for which optimal policies are defined similarly to Equation \ref{eq:id_optimum} by replacing the cost objectives in the space of feasible policies $\Pi_\mathcal{C}$.

To derive the cost improvement bound for team constraints, we define the corresponding joint cost value functions. For the $j^{\text{th}}$ cost function, we define the $j^{\text{th}}$ (stateful) value functions as follows:

\begin{equation}
\begin{split}
    &Q_j^\pi(s, \bm{u}) := \mathbb{E}_{\pi}\left[\sum_{t=0}^{\infty}\gamma^t c_j(s_t, \bm{u}_t) \vert s_0 = s, \bm{u}_0 = \bm{u}\right], \\
    &V_j^\pi(s) := \mathbb{E}_{\bm{u}\sim\pi}\left[Q_{j, \pi}(s, \bm{u}) \right], \\
    &A_j^\pi(s, \bm{u}) = Q_{j, \pi}(s, \bm{u}) - V_{j, \pi}(s).
\end{split}
\end{equation}

In trust region-based methods, Equation \ref{eq:team_optimum} is difficult to optimize directly when considering a joint policy $\pi$ and some other policy $\bar\pi^i$ of agent $i$. Hence, we define the surrogate objective for team constraints following the individual constraint case of \citet{constrained_mappo}.

\begin{definition}
    Let $\pi$ be a joint policy, and $\bar{\pi}^i$ be some other policy of agent $i$. Then, for any of the joint costs of index $j = 1, \dots, m$, we define the surrogate cost objective as follows:
    \[
        L_j^\pi(\bar\pi^i) = \mathbb{E}_{\bm{u}^{-i}\sim \pi^{-i}, u^i \sim \bar{\pi}^i}\left[A_j^\pi(s, \bm{u}) \right],
    \]
\end{definition}
\noindent where $\pi^{-i}$ indicates the policy of all the agents except $i$.

Finally, we extend Lemma 4.3 of \citet{constrained_mappo} to the case of team constraints, deriving a lower bound on how the expected joint costs change when the agents update their policies.

\begin{lemma}
\label{lemma:team_bound}
    Let $\pi$ and $\bar{\pi}$ be joint policies. Let $i\in\mathcal{N}$ be an agent, and $j = 1, \dots, m$ be one of the joint cost indexes. The following inequality holds:
    \begin{equation*}
    \begin{split}  
        &J_j(\bar{\pi}) \leq J_j(\pi) + L_j^\pi(\bar{\pi}^i)+\nu_j\sum_{h = 1}^{|\mathcal{N}|} D_{KL}^{max}(\pi^h, \bar{\pi}^h),\\
        &\text{where}~ \nu_j = \frac{4\gamma\max_{s, \bm{u}}\vert A_j^\pi(s, \bm{u})|}{(1-\gamma)^2}.
    \end{split}
    \end{equation*}
\end{lemma}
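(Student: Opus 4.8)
The plan is to establish this as the team-constraint analogue of Lemma~4.3 of \citet{constrained_mappo}, which proves the corresponding bound for individual constraints, and to carry that proof through with the individual cost functions $c_j^i(s,u^i)$ replaced everywhere by the joint cost functions $c_j(s,\bm{u})$ and the individual cost value/advantage functions replaced by the joint ones $Q_j^\pi, V_j^\pi, A_j^\pi$ defined above. The reason this substitution is legitimate is that the ``cost MDP'' induced by $c_j$ shares its transition dynamics with the reward MDP, so the Kakade--Langford performance-difference identity and the TRPO-style surrogate bounds are agnostic to whether the tracked signal is a reward or a (team) cost. Concretely, I would start from the performance-difference identity for the cost return, $J_j(\bar{\pi}) - J_j(\pi) = \tfrac{1}{1-\gamma}\,\mathbb{E}_{s\sim d^{\bar{\pi}},\,\bm{u}\sim\bar{\pi}}[A_j^\pi(s,\bm{u})]$, where $d^{\bar{\pi}}$ is the discounted state-visitation distribution of $\bar{\pi}$, and compare it against the surrogate $L_j^\pi(\bar{\pi}^i)$, in which only agent $i$'s component is updated and the state is drawn from $d^\pi$.

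The gap $J_j(\bar{\pi}) - J_j(\pi) - L_j^\pi(\bar{\pi}^i)$ then splits into two pieces that I would bound separately. The first is an \emph{occupancy-shift} term, obtained by replacing $s\sim d^{\bar{\pi}}$ with $s\sim d^\pi$; by H\"older's inequality it is at most $\tfrac{1}{1-\gamma}\,\lVert d^{\bar{\pi}}-d^\pi\rVert_1\,\max_{s,\bm{u}}|A_j^\pi(s,\bm{u})|$, and the standard occupancy-drift bound gives $\lVert d^{\bar{\pi}}-d^\pi\rVert_1 \le \tfrac{2\gamma}{1-\gamma}\max_s D_{TV}(\pi(\cdot\mid s),\bar{\pi}(\cdot\mid s))$. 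The second is an \emph{action-mismatch} term, accounting for the fact that $L_j^\pi(\bar{\pi}^i)$ keeps agents $h\neq i$ at $\pi^{-i}$ while $\bar{\pi}$ moves them to $\bar{\pi}^{-i}$; since the $u^i$ marginal is unchanged, this difference equals $\tfrac{1}{1-\gamma}\max_{s,\bm{u}}|A_j^\pi(s,\bm{u})|\max_s D_{TV}\!\big(\bar{\pi}^{-i}(\cdot\mid s),\pi^{-i}(\cdot\mid s)\big)$. In both pieces I would then use that, under decentralized execution, the joint policy factorizes, $\pi(\cdot\mid s)=\prod_h\pi^h(\cdot\mid s)$, so total variation is subadditive over agents, $D_{TV}(\pi(\cdot\mid s),\bar{\pi}(\cdot\mid s)) \le \sum_h D_{TV}(\pi^h(\cdot\mid s),\bar{\pi}^h(\cdot\mid s))$, and likewise for the product over $h\neq i$.

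Finally, I would convert each per-agent total-variation term into a KL term via Pinsker's inequality, $D_{TV}(p,q)^2 \le \tfrac12 D_{KL}(p\,\|\,q)$, collect the $\gamma/(1-\gamma)^2$ factors together with $\max_{s,\bm{u}}|A_j^\pi(s,\bm{u})|$ into the single constant $\nu_j = \tfrac{4\gamma\max_{s,\bm{u}}|A_j^\pi(s,\bm{u})|}{(1-\gamma)^2}$, and merge the two discrepancy contributions into one sum $\sum_{h=1}^{|\mathcal{N}|}D_{KL}^{max}(\pi^h,\bar{\pi}^h)$ (agent $i$ enters only through the occupancy-shift piece, every other agent through both). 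Taking the cost signal rather than the reward in the TRPO bound, and using the absolute value inside $\nu_j$, yields the one-sided upper bound on $J_j(\bar{\pi})$ stated in the lemma, exactly as the reward version produces a lower bound on the return. The step I expect to be the main obstacle is the careful bookkeeping needed to collapse the several KL contributions — the surrogate-versus-exact action evaluation for agent $i$, the action mismatch of agents $h\neq i$, and the state-distribution drift of all agents — into a single coefficient times $\sum_{h}D_{KL}^{max}(\pi^h,\bar{\pi}^h)$ without picking up a stray $|\mathcal{N}|$ factor, and pinning down the exact constant $4$, which depends on taking the advantage bound over $(s,\bm{u})$ jointly and on the precise use of Pinsker; a secondary subtlety is justifying the product factorization of the joint policy in the Dec-MDP setting, which is where the assumption inherited from \citet{cmarl_bounds} that local observations carry sufficient state information is used.
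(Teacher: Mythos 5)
Your high-level plan---treat the team cost as the tracked signal, bound the gap between $J_j(\bar{\pi})$ and the agent-$i$ surrogate, and then split the joint divergence into per-agent terms using the product factorization of the joint policy---matches the paper's, and your closing steps (subadditivity/additivity across agents and exchanging $\max_s$ with the sum over $h$) are exactly its final display. The genuine gap is in the middle. The paper does not re-derive the remainder term: it invokes the coupled bound of Eqs.~(41)--(45) of \citet{trpo} (following Lemma~4.3 of \citet{constrained_mappo}), which is \emph{quadratic} in $\alpha = D_{TV}^{max}(\pi,\bar{\pi})$, namely $\frac{4\gamma\alpha^2\max_{s,\bm{u}}|A_j^\pi(s,\bm{u})|}{(1-\gamma)^2}$, and only a quadratic-in-TV remainder becomes linear in KL via $D_{TV}(p,q)^2 \le D_{KL}(p,q)$. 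Your decomposition---H\"older against $\|d^{\bar{\pi}}-d^\pi\|_1$ plus the occupancy-drift bound, and a TV bound for the action mismatch---produces remainders that are \emph{linear} in $D_{TV}$ (this is the CPO-style bound of Achiam et al.), and Pinsker only gives $D_{TV}\le\sqrt{D_{KL}/2}$, so those remainders convert to $\sqrt{D_{KL}^{max}}$ terms, which cannot be merged into $\nu_j\sum_h D_{KL}^{max}(\pi^h,\bar{\pi}^h)$ (there is no constant $C$ with $\sqrt{x}\le Cx$ near $x=0$). So the difficulty you flag as ``pinning down the constant $4$'' is not bookkeeping: as sketched, your route proves a different (weaker-formed) inequality, not the one stated.

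To recover the stated form you need the coupling argument behind the TRPO bound rather than the H\"older/occupancy-drift split: since $\mathbb{E}_{\bm{u}\sim\pi}[A_j^\pi(s,\bm{u})]=0$, the per-state expected cost advantage under the new joint policy is itself at most $2\alpha\max_{s,\bm{u}}|A_j^\pi(s,\bm{u})|$, and the probability that the coupled old/new trajectories have disagreed by time $t$ supplies the second factor of $\alpha$; summing the resulting $\gamma^t$ series yields the $\frac{4\gamma\alpha^2}{(1-\gamma)^2}$ coefficient. With that step in place (or simply cited, as the paper does, with cost advantages in place of reward advantages and the agent-$i$ surrogate handled as in \citet{constrained_mappo}), your remaining steps---$D_{TV}^2\le D_{KL}$, KL additivity for the factorized joint policy, and $\max_s\sum_h\le\sum_h\max_s$---complete the proof exactly as in the paper.
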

\noindent Proof of Lemma \ref{lemma:team_bound} is presented in Appendix \ref{app:proof}. In practice, trust region algorithms ensuring Lemma \ref{lemma:team_bound} (or its equivalent version for the individual constraints proposed by \citet{constrained_mappo}) are replaced by approximations relying on neural networks and tractable clipping operators that can scale to high-dimensional state and action spaces \cite{trpo, ppo, constrained_mappo}, on top of which we build E2C in the following section.

\section{Entropic Exploration for Constrained MARL}
After introducing the team constraints problem, we present our E2C method that addresses the limited exploration arising from constrained MARL settings (with both individual and team constraints). In particular, we remark that constrained MARL algorithms typically result in a detrimental trade-off between safety and task performance. Exploration plays a crucial role for discovering joint behaviors and balance this trade-off. Before introducing E2C, this section formalizes the constrained MARL problem for both individual and team constraint cases.

The typical goal of constrained MARL is to maximize a reward function while satisfying constraints modeling safe behaviors. Recalling the cost and reward-based objective notations of Sections \ref{sec:background_marl} and~\ref{sec:team_theory},  constrained problems can be defined as maximizing the joint reward objective---$\max_\pi J_r(\pi)$---while satisfying constraints---$J_j^i(\pi) \leq l_j^i~\forall i \in \mathcal{N},~j = 1, \dots, m^i$ for individual constraints, or $J_j(\pi) \leq l_j~\forall j = 1, \dots, m$ for team ones.

In the safe RL literature, the Lagrangian method~\citep{Nocedal} is commonly used to transform the constrained problem into an equivalent unconstrained one $\forall i \in \mathcal{N}$, using a dual variable as follows:
\begin{equation}
\begin{split}
&\mathcal{L}^{\pi}(\bm{\lambda}) = J_r^{\pi} - \mathcal{L}^{\pi}_\mathcal{C}(\bm{\lambda}), \\
&\mathcal{L}^{\pi}_\mathcal{C}(\bm{\lambda}) = 
\begin{cases}
    \lambda_j^i \big(J_j^i(\pi) - l_j^i\big)~\forall j = 1, \dots, m^i & \text{individual}\\
    \lambda_j \big(J_j(\pi) - l_j)~\forall j = 1, \dots, m & \text{team}\\
\end{cases},
\label{eq:e2c_lagrangian}
\end{split}
\end{equation}
\noindent where $\bm{\lambda}$ are the so-called \textit{Lagrangian multipliers} and act as a penalty in the optimization objective of each agent. The goal is thus to solve the resulting \textit{max min} problem: 
\begin{equation}
\max_\pi \min_{\bm{\lambda} \geq 0} \mathcal{L}^{\pi}(\bm{\lambda}).
\label{eq:maxmin}
\end{equation}
A typical solution to Equation \ref{eq:maxmin} is to iteratively take gradient ascent steps in $\pi$ and descent in $\bm{\lambda}$. We first update the multipliers following $\nabla_{\bm{\lambda}} \mathcal{L}^{\pi}(\bm{\lambda})$, noting the multipliers must be $\geq 0$ because they act as a penalty when the constraints are not satisfied (\textit{i.e.}, $\bm{\lambda}$ increases) while decreasing to $0$ (\textit{i.e.}, removing any penalty) when the constraint objectives are satisfied. Then, we maximize the policy's parameters following $\nabla_{\pi} \mathcal{L}^{\pi}\big(\bm{\lambda})$. Broadly speaking, Lagrangian-based algorithms focus on satisfying the imposed constraints using a penalty growing unbounded when constraints are violated. Once the constraints are satisfied, the multipliers scale down (to zero) and allow the gradient to follow the direction that maximizes the main reward objective. Such formalization can be used to optimize arbitrary policy gradient objectives.

\subsection{E2C}
\label{sec:e2c}
As discussed in Section \ref{sec:preliminaries}, we build E2C on top of the Lagrangian MAPPO---a strong baseline across a variety of scenarios~\citep{mappo, constrained_mappo}.\footnote{Due to the nature of the Lagrangian method, and the following discussion on OEM (via exploration-driven rewards), our contributions are orthogonal to the chosen policy-gradient MARL algorithm and can thus be integrated with other approaches.} The resultant E2C-MAPPO algorithms address the challenges of using constraints in multiagent systems by using \textit{entropy enhanced agents}. In this section, we start by deriving the constrained MAPPO algorithm for individual and team constraints and then present the entropic exploration method based on OEM.

Following the MAPPO baseline, we learn a centralized advantage estimator $A_\phi(s, \bm{u})$ parametrized by $\phi$, while each agent $i \in \mathcal{N}$ learns a policy $\pi_{\theta_i}$ parametrized by $\theta_i$. Policies' parameters are updated using the following clipped objective:
\begin{equation}
\begin{split}
\max_{\theta_i}\min_{\bm{\lambda}}~\mathbb{E}_{\pi_{\theta_i}}\Big[&\min\Big( q(\theta_i, \theta_i') A_\phi(s, \bm{u}), \\
&\hspace{23pt}\text{clip}\left(q(\theta_i, \theta_i'), 1-\epsilon,1+\epsilon\right)A_\phi(s, \bm{u}) \Big)\\
&+ q(\theta_i, \theta_i') \mathcal{L}_\mathcal{C}^{\pi_{\theta_i}}(\bm{\lambda})\Big],
\label{eq:constrained_mappo}
\end{split}
\end{equation}
where the centralized advantage measures the overall effect of selecting a joint action, and $q(\theta_i, \theta_i') = \frac{\pi_{\theta_i}(u_i|h_i)}{\pi_{\theta_i'}(u_i|h_i)}$.
In more detail, $\mathcal{L}_\mathcal{C}^{\pi_{\theta_i}}$ depends on the nature of constraints (\textit{i.e.,} individual or team as in Equation \ref{eq:e2c_lagrangian}). For example, consider the simplified case with one individual constraint for each agent and the corresponding one team constraint case. In the case of an individual constraint $c_1^i$ (with threshold $l_1^i$), each agent $i$ learns a separate multiplier $\lambda_1^i$ and a cost-advantage estimator based on local information. Conversely, when considering a team constraint $c_1$ with threshold $l_1$, we learn a single team multiplier $\lambda_1$ and a joint cost-advantage estimator. Both approaches have pros and cons that follow the benefits and drawbacks of using decentralized (\textit{i.e.,} with local information) and centralized (\textit{i.e.,} with joint information) estimators in MARL. Specifically, individual constraints scale better as the size of local information used by cost-advantage estimators does not depend on the number of agents. However, using local information can hinder performance \citep{iql}. In contrast, the centralized case does not scale well in the number of agents due to the cardinality of the joint observation and action spaces but leveraging joint information typically improves value estimates and performance \citep{maddpg}.

\subsubsection{Entropic Exploration} 

We use observation entropy maximization to design an exploration-driven reward that incentivizes agents' exploration. For OEM, we employ \textit{quantization} \citep{aydeniz2023novelty} to cluster similar observation vectors together for the experiments in multi-rover exploration \citep{agogino2004unifying} due to the low cardinality of the observations. In the more complex state spaces of particle environments and safe MaMuJoCo tasks \cite{penv_original, safe_mamujoco}, we use \textit{knn} estimate \citep{singh2003nearest} of entropy to deal with higher dimensional observations. The resultant OEM reward is presented in Algorithm \ref{alg:entropyShape}. Specifically, each agent receives a reward bonus based on the novelty of its current observation. OEM thus aims to improve the search in the observation space, which is key to learning good joint policies in multiagent systems \citep{aydeniz2023novelty}. To emphasize observations that might have a contribution to the overall team task more prominent (for an efficient search), we incorporate a value, $\beta(o)$, as described in~\cite{aydeniz2023novelty} in multi-rover domain. In safe MaMuJoCo tasks, we incorporate the OEM reward into the extrinsic task reward, $r_{ex}(o)$, via a hard-coded weight (0.3), $\psi$, as applied in~\cite{seo2021state}. In particle environments, we observe that using pure OEM rewards is sufficient to unearth safe and cooperative behaviors.

\begin{algorithm}[t]
\caption{Observation Entropy Maximizing reward for agent $i$}
\label{alg:entropyShape}
\begin{algorithmic}[1]
\State \textbf{Define} flags $count\_based$, $knn\_approximation$, and $mix\_with\_extrinsic\_reward$
\State For a sampled observation  $o_{t+1}^i$ by action $u_{t}^i$ and observation buffer $b^i$ storing observations from a particular episode
    \If{$count\_based$}
        \State \textit{count} $\gets$ Count occurrences of $o_{t+1}^i$ in $b^i$
        \State $\textit{reward} \gets \frac{1}{\textit{count}}$ 
    \ElsIf{$knn\_approximation$}
        \State $D_{i, k, n} \gets$ Compute the distance of $o_{t+1}^i$ to its $k^{th}$ neighbor observation in $b^i$ for 
        \State $\textit{reward} \gets log{(D_{i, k, n} + 1)}$
    \EndIf
    \State Update $b^i$ with $ o_{t+1}^i$
    \If{${\beta}(o_{t+1}^i)$}
        \State $\textit{reward} \gets {\beta}(o_{t+1}^i)\textit{reward}$
    \EndIf
    \If{$mix\_with\_extrinsic\_reward$}
        \State $\textit{reward} \gets r_{ex}(o_{t+1}^i) + \psi \textit{reward}$
    \EndIf
    \State Yield $\textit{reward}$
\end{algorithmic}
\end{algorithm}

\subsubsection{E2C-MAPPO} 
After introducing all the components required to design E2C-MAPPO in the previous sections, Algorithm \ref{alg:e2c_mappo} shows a general template for our method. For simplicity, we show the procedure using team constraints, but the individual constraint case follows by replacing the team components with the individual ones as previously discussed. In detail, after defining the desired constraints thresholds and initializing agents' policies, value functions, and multipliers (lines 1-2), we follow the training loop of MAPPO algorithm (highlighted in italics), where agents interact in the environment to collect training data (lines 3-7). After each episode, for each agent, we perform the following steps:
\begin{itemize}
    \item Compute the OEM reward (line 10), following Algorithm \ref{alg:entropyShape}.
    \item Update the Lagrangian multipliers as described in Section \ref{sec:e2c} (line 11).
    \item Compute the (return) centralized advantage and the cost-advantages,  using the advantage functions parameterized by $\phi, \{\phi_{c_j}\}_{j = 1, \dots, m}$ (line 12).
\end{itemize}
We then update the agents' policy parameters $\{\theta_i\}_{i \in \mathcal{N}}$, the cost value functions $\{\phi_{c_j}\}_{j = 1, \dots, m}$, and the centralized value function $\phi$.

Overall, E2C-MAPPO enhances agents with an exploration-driven reward based on OEM. Once constraints are satisfied, the optimization process ``focuses" on maximizing the task objective (\textit{i.e.,} the team reward) where agents are incentivized to resume exploration by maximizing the entropy of the observation distribution.

\begin{algorithm}[t]
\caption{Template for E2C-MAPPO with team constraints}
\label{alg:e2c_mappo}
\begin{algorithmic}[1]
\State \textbf{Given:} Cost functions $c_j$ with thresholds $l_j,~\forall j = 1, \dots, m$
\State \textbf{Initialize:} 
\begin{itemize}
    \item Actors (policies) parameters $\theta_i~\forall i \in \mathcal{N}$
    \item Joint critic parameters $\phi$
    \item For each team constraint, \textit{i.e.,} $\forall j = 1, \dots, m$: 
        \begin{itemize}
            \item Joint cost-value parameters $\phi_{c_j}$
            \item Multipliers $\lambda_j = 0$
        \end{itemize}  
\end{itemize}

\For{each episode}
    \State \textit{Reset} the environment
    \For{step $t = 0, 1, \dots$}
        \State \textit{Sample} individual actions $u_{t}^i \sim \pi_i$ 
        \State \textit{Execute} the joint action $\bm{u}_t = \{u_t^i\}_{i\in\mathcal{N}}$; \textit{get} joint reward and costs and \textit{update} agents' information
    \EndFor
    \For{each agent $i$}
        \State \textbf{Model the OEM rewards} (Algorithm \ref{alg:entropyShape})
        \State \textbf{Update the Lagrangian Multipliers} (Section \ref{sec:e2c})
        \State \textbf{Compute advantage estimates} $A_\phi, \{A_{\phi_{c_j}}\}_{j = 1, \dots, m}$
        \State \textbf{Update policy's} $\theta_i$ (Equation \ref{eq:constrained_mappo})
        \State \textbf{Update Cost value functions $\phi_{c_j}$} using cost values by standard regression on mean-squared error ($\forall j = 1, \dots, m$)
    \EndFor
    \State \textbf{Update Centralized value function $\phi$} using reward values by standard regression on mean-squared error
\EndFor
\end{algorithmic}
\end{algorithm}
\section{Experiments}
\label{sec:experiments}

Our experiments aim to answer the following questions: (i) \textit{How well does E2C-MAPPO solve standard cooperative tasks compared to the unconstrained (unsafe) and constrained (safe) baseline}? (ii) \textit{How do different definitions of constraints (i.e., individual and team) affect performance}? (iii) \textit{Is policy entropy detrimental to constrained MARL performance when compared to our observation entropy?}

\noindent To answer these questions, we first evaluate E2C-MAPPO, the unconstrained MAPPO, and two constrained MAPPO baselines with and without policy entropy in increasingly complex variations of the well-known multi-rover domain \cite{agogino2004unifying}.
We then compare our framework and the two constrained baselines across three particle environment tasks~\cite{penv_original} and two safe MaMuJoCo locomotion scenarios~\cite{safe_mamujoco}, outlined below and depicted in Figure \ref{fig:environments}.\footnote{Other multiagent benchmarks such as SMAC \citep{smac} are not standard in safe MARL literature as they do not represent realistic scenarios, which makes it challenging to come up with relevant safety criteria.} Overall, in the navigation-based scenarios (\textit{i.e.,} multi-rover and particle environment), the safety requirement is \textit{collision avoidance}, while in locomotion tasks is a \textit{velocity limit}. Hence, when agents collide or exceed the maximum velocity, they receive a positive cost value and they try to limit its accumulation under defined thresholds.


\begin{figure}[t]
    \centering
    \includegraphics[width=\linewidth]{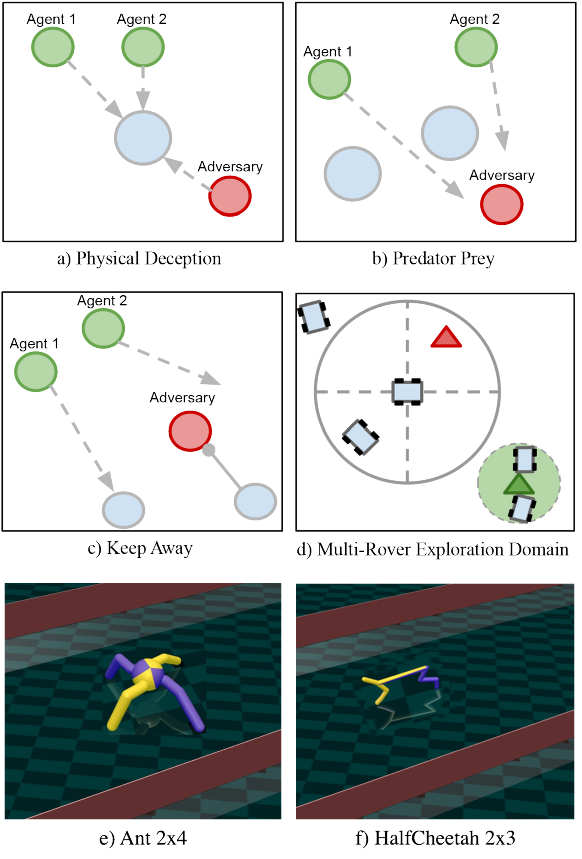}
    \caption{Particle environments (a, b, c)~\cite{penv_original}; cooperative agents are denoted by green color, adversaries by red, and landmarks by blue. Multi-rover domain (d) \citep{agogino2004unifying}; one of the agents is represented via its sensory coverage, a POI observed (green) by two agents (coupling factor is 2), and an unobserved POI (red). Multiagent Ant and HalfCheetah (e, f); each agent controls separate parts of the robot (image credit: \cite{safe_mamujoco}).}
    \label{fig:environments}
\end{figure}

\subsection{Environments}
In this section, we briefly introduce the environments considered in our evaluation. We refer to the original works for more details regarding observation and action spaces, and rewards \cite{agogino2004unifying, penv_original, safe_mamujoco}.
 
\subsubsection{Multi-Rover Exploration}
\label{sec:rover_domain}
This continuous and sparse reward (\textit{i.e.,} the team reward is only given at the end of an episode) domain (Figure \ref{fig:environments}d)~\cite{agogino2004unifying} consists of multiple agents (\textit{rovers}), and points of interest (POIs). Each rover must learn cooperative navigation skills to observe a POI simultaneously in the environment. The team size required to observe a POI is determined by a coupling factor. A higher coupling translates into a more complex coordination problem. The cost functions model collisions (\textit{i.e.,} agents receive a positive cost upon each collision) to incentivize the agents to learn collision-free navigation behaviors. These joint behaviors require taking long sequences of joint actions, which are often difficult to accomplish when incorporating constraints (Section \ref{sec:preliminaries}).  

\subsubsection{Particle Environments}
We consider three particle environments \cite{penv_original, maddpg} (Figures \ref{fig:environments}: a, b, c). For all these tasks, we consider 3 cooperative (\textit{good}) agents and 1 adversary. The latter learns a policy using the unconstrained PPO~\cite{ppo} algorithm, updating the same type of policy network as the cooperative agents. The episodic team reward totals the cooperative agents' rewards throughout an episode, and the cost functions model collisions to achieve collision-free team behaviors. In more detail, we consider the following environments:
\begin{itemize}
    \item \textit{Physical deception}: Good agents and an adversarial agent compete to reach a single landmark. Good agents cooperate to reach the landmark and receive the negative of the closest good agents' distance and the distance of the adversarial agent's distance to the landmark as their reward.
    \item \textit{Keep away}: Good agents have to reach a landmark (randomly chosen from a set of 2 landmarks) and are rewarded with their negative distance to the landmarks. An adversarial agent tries to push away the agents from their target.
    \item \textit{Predator prey}: Landmarks are used as obstacles, and good agents (with lower speed) need to capture a faster adversarial agent. When the good agents touch the adversarial agent, the agents receive positive rewards, whereas the adversary gets a penalty.
\end{itemize}

\subsubsection{Safe MaMuJoCo} 
These tasks extend the well-known single-agent locomotion benchmark to multiagent settings. Each agent controls different parts of the robot and receives a joint reward for forward movements, incentivizing learning good locomotion behaviors. We consider two tasks \cite{safe_mamujoco} (Figures \ref{fig:environments}: d, e):

\begin{itemize}
    \item \textit{Ant 2x4}: Two agents control four joints of an ant and each has to learn how to run in a corridor. The agents receive a positive cost when an ant topples over or gets too close to the wall.
    \item \textit{HalfCheetah 2x3}: Two agents control three joints of a cheetah and each has to learn how to run in a corridor. There are moving bombs in the corridor and the agents receive a positive cost when the cheetah gets too close to the bombs.
\end{itemize}

\subsection{Implementation Details}
 Data collection is performed on Xeon E5-2650 CPU nodes with 64GB of RAM. Considering the twofold nature of E2C, we call E2C-MAPPO (T) the entropy maximizing algorithm using team constraints, and E2C-MAPPO the one with individual constraints. For a fair comparison, the threshold for each agent in the individual constraint case equals the team threshold divided by the number of agents (detailed in the following section). 

The following results show the average return smoothed over the last hundred episodes of 10 runs per method with shaded regions representing the standard error. As in previous work on individual constraints \cite{cmarl_bounds}, we incorporate a GRU \citep{bahdanau2014neural} layer into the agents' networks to address the partially observable nature of some tasks and use weight sharing to speed up the training process \citep{weightsharing}. After performing an initial grid search, we use the best-performing parameters (Table \ref{table:hyper}) for MAPPO, its constrained, and E2C versions.

\begin{table}[htbp]
\centering
\caption{Hyper-parameters used in our experiments}
\vspace{-0.2cm}
\label{table:hyper} 
\catcode`,=\active
\def,{\char`,\allowbreak}
\renewcommand\arraystretch{0.95}  
\begin{tabular} {p{1.5cm}<{\raggedright} p{4.0cm} p{2cm}<{\raggedright}} 
  \toprule
    \textbf{Component}           & \textbf{Hyper-parameter}                  & \textbf{Setting}       \\ 
  \midrule
    MAPPO                     & Clipping Coefficient                  & 0.2                   \\
                              & Discount Factor                     & 0.9                       \\
                              & GAE $\lambda$                        & 0.95                          \\
                              & Entropy Coefficient                  & 1e-3                          \\
                              & Batch Size                           & 4096                          \\
    
    Lag. Spec.                & Learning Rate                        & 0.05                          \\
                              & Lag. Multiplier                      & 1.0                  \\

    RL Actor                  & Actor Architecture                   & [128, 128]                   \\
    Spec.                     & Critic Architecture                  & [128, 128]                    \\ 
                              & Activation Func.                     & ReLU                      \\
                              & Optimizer                            & Adam \cite{kingma2014adam}                      \\
                              
   Entropy                    & Quantization Level                   & 1 (Binary)       \\
   Rewards                    & $\beta(.)$ in Rover Domain            & POI values                       \\
                              & $\beta(.)$ in Particle Environments  & Not defined                       \\
                              & $\beta(.)$ in Safe MaMuJoCo          & Not defined                       \\
                              & $k$ in PEnvs, Safe MaMuJoCo          & 5, 10   \\
    Episode                   & in Rover Domain                      & 80         \\
    Max. Len.                 & in PEnvs, Safe MaMuJoCo              & 80, 2000                       \\
  \bottomrule
\end{tabular}
\end{table}

\subsection{Results}

This section aims at showing how well the proposed approach solves the cooperative tasks and how different definitions of constraints impact the training process. 

\subsubsection{Multi-Rover Exploration}

\begin{figure}[t]
    \centering
    \includegraphics[width=1.0\linewidth]{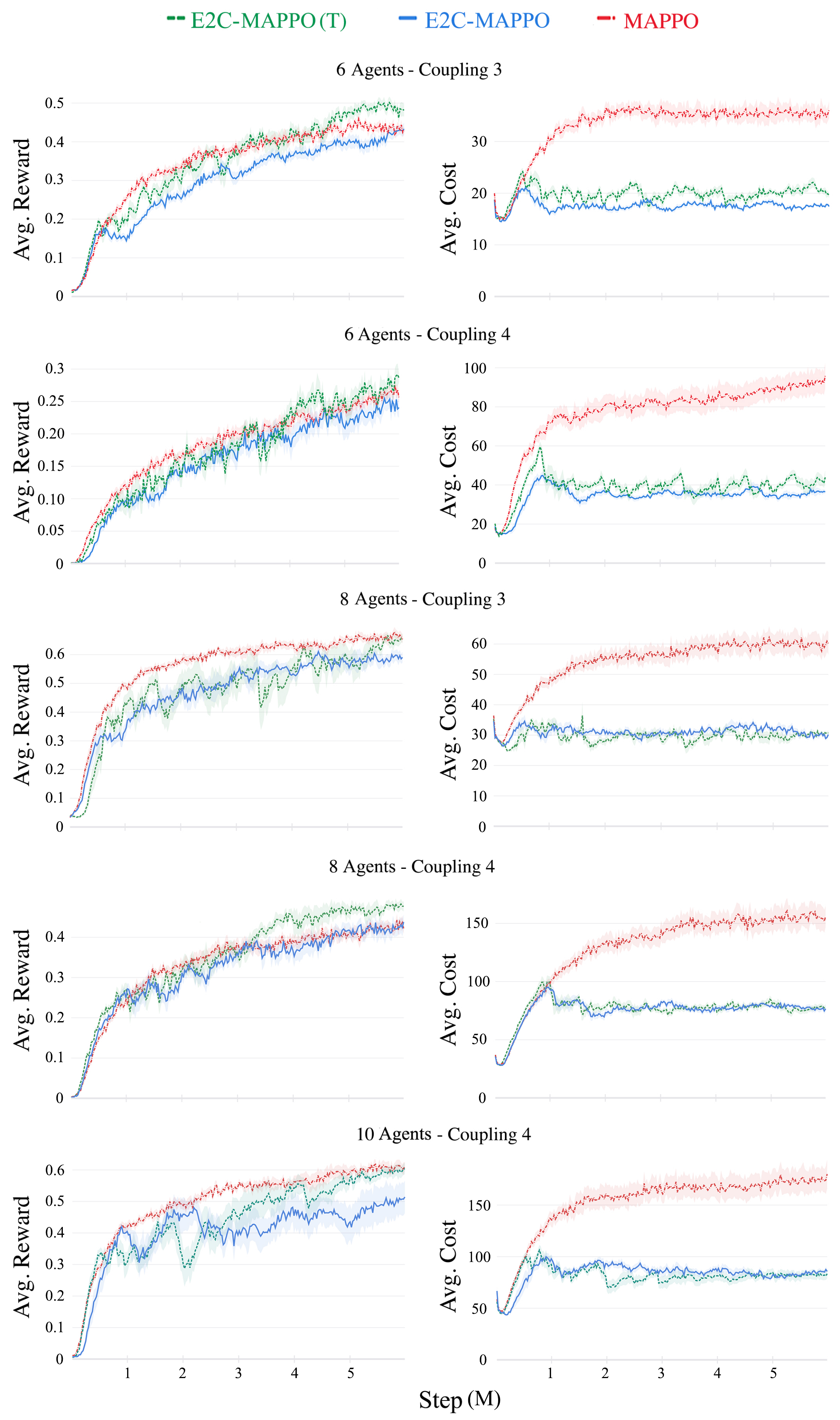}
    \caption{Average reward and cost for the unconstrained MAPPO, E2C-MAPPO and E2C-MAPPO (T) in scenarios with 6, 8, 10 rovers and coupling factors of 3, and 4. Our E2C algorithms have comparable performance to the unconstrained MAPPO while halving the unsafe actions. Moreover, the team constrained algorithm, E2C-MAPPO (T), achieves higher performance in the most complex variations of the tasks.}
    \label{fig:results}
\end{figure}

The constraint thresholds for E2C-MARL (T) is set to 20, 20, 40, 30, and 75 for the task variations listed in Figure \ref{fig:results}. Each row of the figure shows experiments with increasing number of agents and coupling. The first column highlights the average reward and the second column shows the average cost. 

In general, the E2C-MAPPO variations have comparable task-objective performance (\textit{i.e.}, average reward) to the unconstrained version, but reduce the cost by half. Crucially, when the number of rovers increases, using team constraints leads to better performance than using individual constraints. We also perform experiments in a more challenging setup with 10 rovers and a coupling of 4, which confirms the superior performance of E2C-MAPPO (T). \textit{This supports our claims regarding the potential benefits of team constraints in fully cooperative scenarios.} 

\subsubsection{Experiments with Policy Entropy}

To test our claims on the detrimental effect of policy entropy, we compare the impacts of observation entropy against policy entropy. We perform the tests using: (i) the Lagrangian MAPPO with individual constraints and policy entropy proposed by~\citet{constrained_mappo} (C-MAPPO (w. policy entropy)); and (ii) a variation not employing any entropy (C-MAPPO). For a fair comparison over these previous individual constraint baselines, we compare them against E2C-MAPPO. 
\begin{figure}[t]
    \centering
    \includegraphics[width=1.0\linewidth]{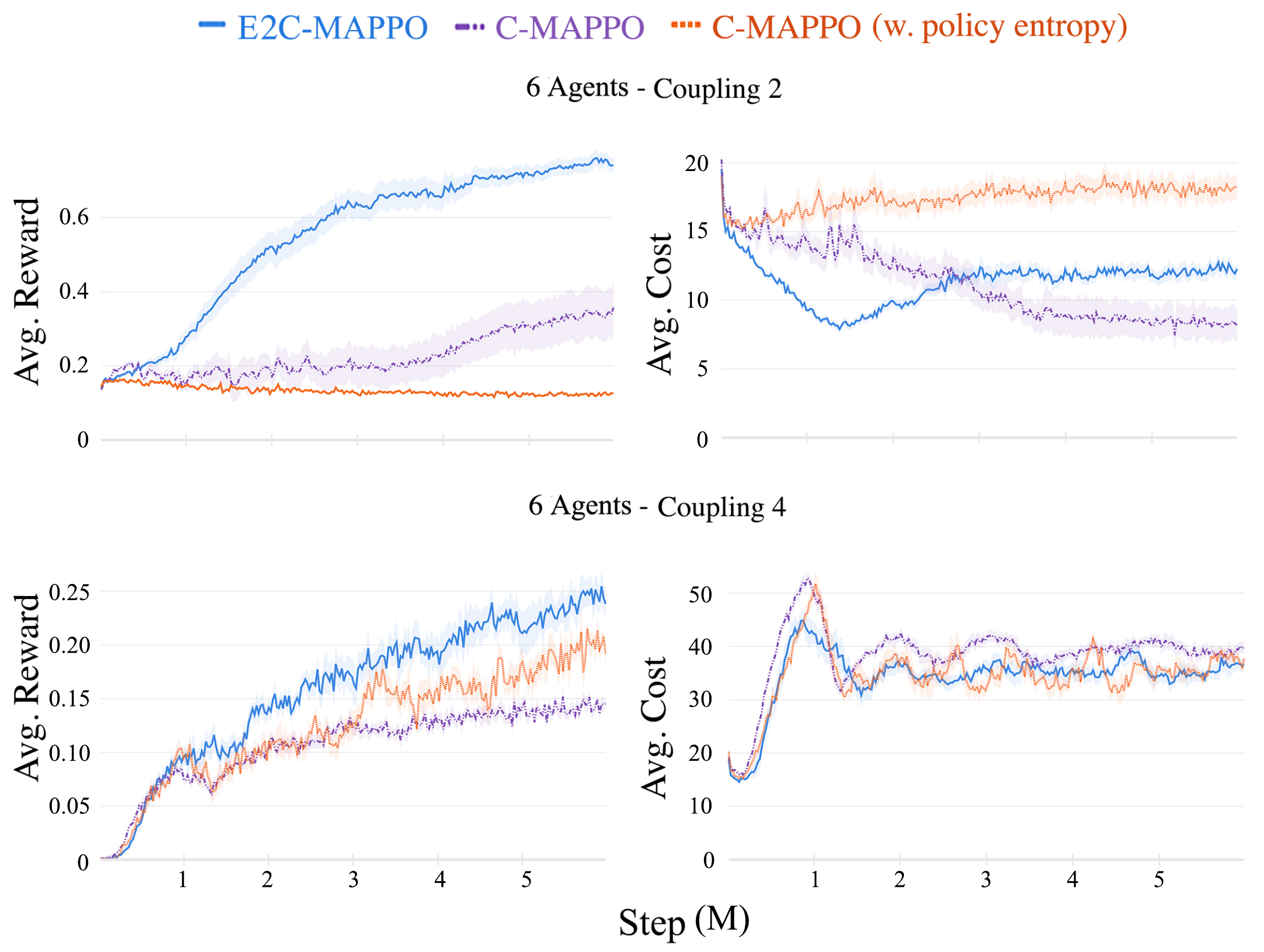}
    \caption{Impact of our observation entropy maximizing reward (E2C-MAPPO) against employing no entropy (C-MAPPO), and policy entropy as in previous constrained MAPPO algorithms (C-MAPPO (w. policy entropy)) with individual constraints \cite{constrained_mappo}. Explanatory experiments in a scenario with 6 rovers and coupling factors of 2, and 4.}
    \label{fig:ablation_results}
\end{figure}

\begin{figure*}
    \centering
    \includegraphics[width=0.8\linewidth]{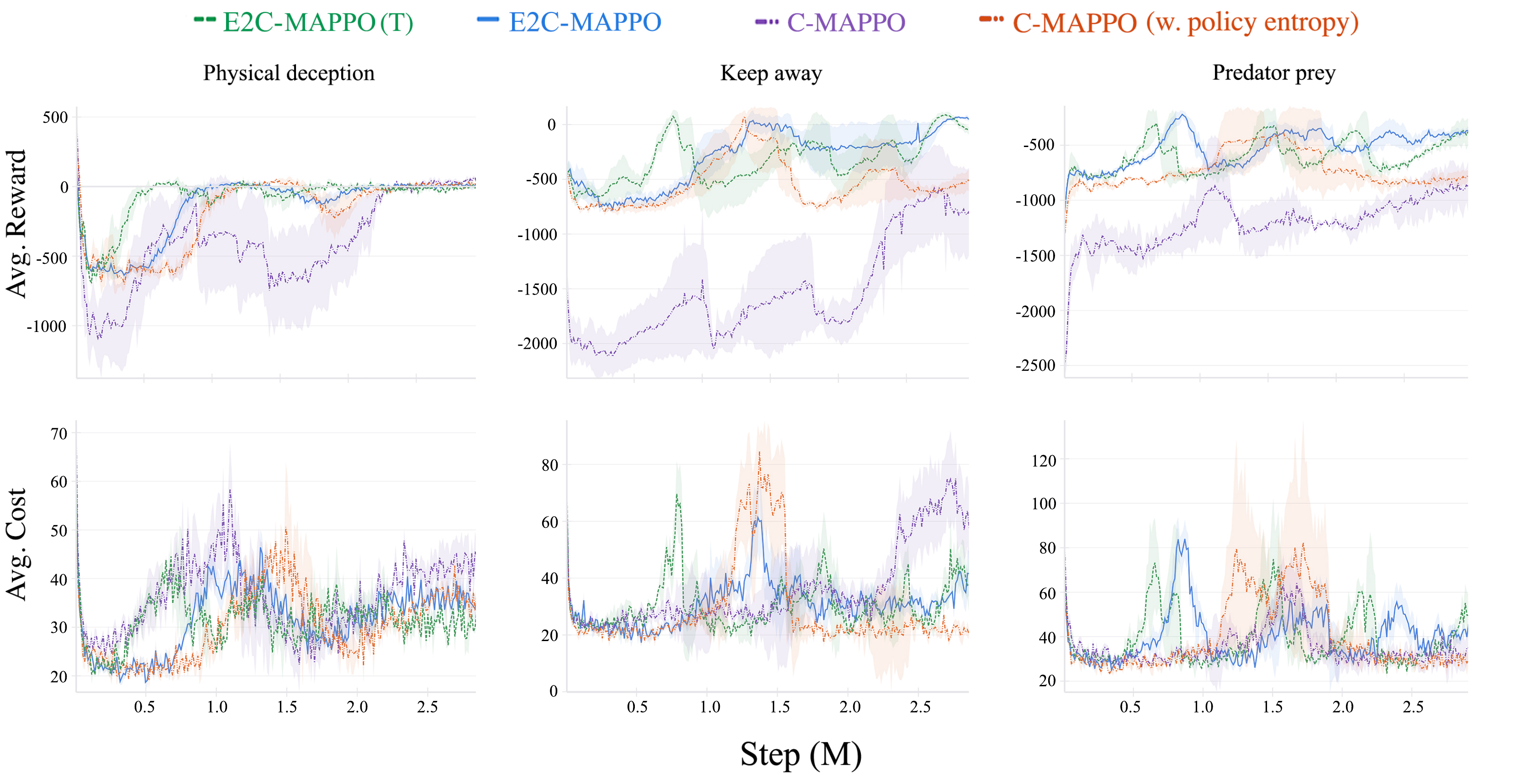}
    \caption{Average reward and cost in the particle environments.}
    \label{fig:results_mpe}
\end{figure*}

As claimed in Section \ref{sec:preliminaries}, policy entropy is detrimental under restrictive constraint thresholds. When we set the coupling and the threshold to 2, C-MAPPO with policy entropy has the lowest performance (Figure \ref{fig:ablation_results} top) as the agents' multipliers grow unbounded when failing to satisfy the constraint threshold. Moreover, while C-MAPPO has slightly lower cost than E2C-MAPPO, our method significantly outperforms in the main task performance (\textit{i.e.,} C-MAPPO struggles to learn good cooperative behaviors due to the lack of exploration). This clearly confirms the benefits of keeping the exploration active via E2C agents. Considering a coupling factor of 4, the C-MAPPO baselines achieve comparable costs and exhibits some performance due to the higher constraint thresholds. However, the information carried out by diverse observations allows E2C agents to exhibit higher performance (Figure \ref{fig:ablation_results} bottom).

\subsubsection{Particle environments}
After this preliminary experiments, we use the constrained methods, E2C-MAPPO (T), E2C-MAPPO, C-MAPPO, and C-MAPPO (w. policy entropy) for the remaining evaluations. Figure \ref{fig:results_mpe} shows the results in the particle environments, where the learning adversarial agent makes the good agents' performance brittle. Overall, all the methods achieve comparable performance in physical deception. However, E2C-MAPPO (T) converges to its peak performance in approximately half of the steps required by the C-MAPPO baselines, while improving sample efficiency over E2C-MAPPO. Cost plots show that E2C-MAPPO (T) also satisfies the constraint threshold in fewer steps, motivating its higher sample efficiency in discovering safe collaborative behaviors with higher payoffs. Results in keep away and predator prey show that the performance of C-MAPPO agents is significantly more brittle than E2C agents, as its policy entropy negatively affects performance in tasks with high uncertainty. Especially, C-MAPPO's limited exploration during the early stages of training causes agents to remain still to avoid collisions, finally lead to task failure. In contrast, E2C-MAPPO algorithms learn high-reward, low-cost behaviors without significant differences between individual and team constraints.

Overall, we note that team constraints improve performance in tasks with less uncertainty and higher cooperation while matching the performance of individual constraints under higher uncertainty.

\subsubsection{Safe MaMuJoCo}
Finally, we compare the methods in two safe MaMuJoCo tasks. In this set, we only consider E2C-MAPPO (T) as different agents use the joints of the same robot; thus, our two constraint formalizations are equivalent. We set the constraints thresholds to 5 for \textit{Ant} and 20 for {HalfCheetah}.
Figure \ref{fig:results_mamujoco} shows the results in \textit{Ant 2x4} and \textit{HalfCheetah 2x3} that confirm our claims on the benefits of observation entropy maximization in constrained setups. In more detail, E2C-MAPPO (T) achieves higher returns and lower costs (\textit{i.e.,} fewer constraint violations) in both scenarios.
\begin{figure}[t]
    \centering
    \includegraphics[width=1.0\linewidth]{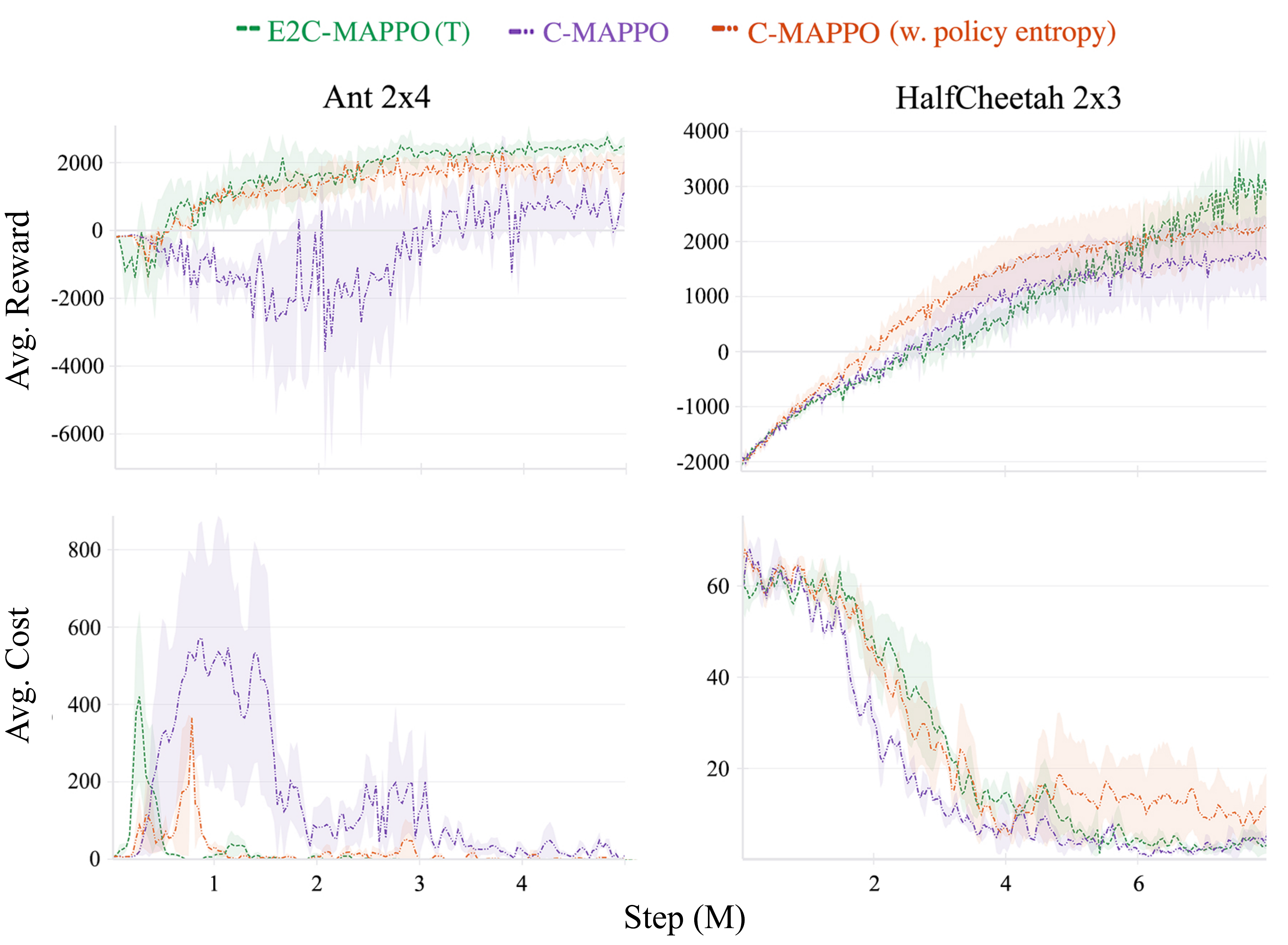}
    \caption{Average reward and cost in the safe MaMuJoCo environments. We only consider team constraints as agents represent different joints of the same robot--the individual constraint case is equivalent to the team formalization.}
    \label{fig:results_mamujoco}
\end{figure}
\section{Conclusion}
\label{sec:discussion}

We introduce entropic exploration to address the challenges of real-world applications of multiagent systems requiring cooperation and safety. We highlight the limitations of existing constrained MARL methods, which often rely on individual constraints and policy entropy maximization. These approaches can compromise task performance due to limited exploration and increased randomness in action selection.
Our approach tackles these issues by: (i) investigating team constraints, which better capture the specifics of cooperative multiagent tasks, from a theoretical and practical perspective; and (ii) leveraging observation entropy maximization (OEM) to encourage diversity in observations upon satisfying constraints. The OEM-based rewards effectively balance the competing objectives of task performance and constraint satisfaction, avoiding the pitfalls of overly conservative behavior. By prioritizing observation diversity, our E2C algorithms foster exploration even within the boundaries of strict safety constraints, enhancing both learning cooperative strategies and adherence to safety specifications.

Our experiments spanned multiple challenging environments confirms E2C's ability to consistently satisfy both individual and team constraints. Our results show the superior performance of E2C in achieving higher task performance while maintaining or improving constraint satisfaction compared to traditional baselines. These results validate the potential of OEM as a crucial mechanism to promote safe, coordinated behaviors in multiagent systems.

In conclusion, E2C offers a novel, effective solution to the exploration and safety trade-offs inherent in constrained MARL. By shifting the focus from policy randomness to observation entropy driven exploration, we provide a more principled approach to balancing performance and safety in cooperative settings. Future work could extend this framework to more complex environments, examine scalability in larger agent teams, and explore the potential of E2C in real-world applications.





\bibliographystyle{ACM-Reference-Format}
\balance
\bibliography{biblio}


\newpage
\onecolumn
\section*{Appendix}

\section{Missing Proof in Section 3}
\label{app:proof}

\setcounter{lemma}{1}

\begin{lemma}
    Let $\pi$ and $\bar{\pi}$ be joint policies. Let $i\in\mathcal{N}$ be an agent, and $j = 1, \dots, m$ be one of the joint cost indexes. The following inequality holds:
    \begin{equation*}
        J_j(\bar{\pi}) \leq J_j(\pi) + L_j^\pi(\bar{\pi}^i)+\nu_j\sum_{h = 1}^{|\mathcal{N}|} D_{KL}^{max}(\pi^h, \bar{\pi}^h),\qquad
        \text{where}~ \nu_j = \frac{4\gamma\max_{s, \bm{u}}\vert A_j^\pi(s, \bm{u})|}{(1-\gamma)^2}.
    \end{equation*}

\end{lemma}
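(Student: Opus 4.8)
The statement is the cost-valued, \emph{team} analogue of the trust-region bound of \citet{constrained_mappo} (itself a multiagent version of the classical TRPO guarantee), so the plan is to re-run that argument while (i) carrying the joint cost advantage $A_j^\pi(s,\bm{u})$ in place of the return advantage and (ii) folding all agents' policy changes into the divergence term. The point that makes the team case essentially free is that every inequality used below treats $A_j^\pi$ merely as a bounded function of $(s,\bm{u})$ satisfying $\mathbb{E}_{\bm{u}\sim\pi(\cdot\mid s)}[A_j^\pi(s,\bm{u})]=0$, which is exactly what the definitions of $Q_j^\pi,V_j^\pi,A_j^\pi$ give; hence passing from local costs $c_j^i(s,u^i)$ to joint costs $c_j(s,\bm{u})$ changes nothing structural and only replaces $\max_{s,u^i}|A_j^{i,\pi}|$ by $\max_{s,\bm{u}}|A_j^\pi(s,\bm{u})|$.

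The first step is the cost version of the Kakade--Langford performance-difference identity: since $A_j^\pi$ is built from $c_j$ exactly as a return advantage is built from $r$,
\begin{equation*}
J_j(\bar{\pi}) = J_j(\pi) + \frac{1}{1-\gamma}\,\mathbb{E}_{s\sim d_{\bar{\pi}},\,\bm{u}\sim\bar{\pi}(\cdot\mid s)}\!\left[A_j^\pi(s,\bm{u})\right],
\end{equation*}
where $d_{\bar{\pi}}$ is the normalized discounted state-visitation distribution of $\bar{\pi}$. I would then subtract $L_j^\pi(\bar{\pi}^i)$ and split the remaining error with the triangle inequality into (a) a \emph{state-visitation} mismatch, replacing $d_{\bar{\pi}}$ by $d_{\pi}$, and (b) an \emph{other-agents} mismatch, replacing the joint action law $\bar{\pi}=\prod_h\bar{\pi}^h$ by $\bar{\pi}^i\times\pi^{-i}$ (the update of the agents $h\neq i$ that the single-agent surrogate does not see).

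Both mismatches I would bound by total-variation distances of the policies. For (a) I would use the standard estimate $D_{TV}(d_{\bar{\pi}},d_{\pi})\le\frac{\gamma}{1-\gamma}\max_s D_{TV}(\bar{\pi}(\cdot\mid s),\pi(\cdot\mid s))$ together with $|\mathbb{E}_{\bm{u}\sim\bar{\pi}}[A_j^\pi(s,\bm{u})]|\le 2\max_s D_{TV}(\bar{\pi}(\cdot\mid s),\pi(\cdot\mid s))\,\max_{s,\bm{u}}|A_j^\pi(s,\bm{u})|$ (this uses $\mathbb{E}_{\bm{u}\sim\pi}[A_j^\pi]=0$), exactly as in the TRPO coupling argument, so that (a) contributes a term of order $\frac{\gamma\max_{s,\bm{u}}|A_j^\pi(s,\bm{u})|}{(1-\gamma)^2}$ times $\big(\max_s D_{TV}(\bar{\pi}(\cdot\mid s),\pi(\cdot\mid s))\big)^2$; (b) I would bound by $2\max_{s,\bm{u}}|A_j^\pi(s,\bm{u})|\,\max_s D_{TV}(\bar{\pi}(\cdot\mid s),(\bar{\pi}^i\times\pi^{-i})(\cdot\mid s))$, which only involves agents $h\neq i$. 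The genuinely multiagent step is the passage from these joint-policy divergences to the per-agent sum $\sum_{h=1}^{|\mathcal{N}|}D_{KL}^{max}(\pi^h,\bar{\pi}^h)$: since each policy factorizes across agents, KL is additive over the product, $D_{KL}(\pi(\cdot\mid s)\,\|\,\bar{\pi}(\cdot\mid s))=\sum_h D_{KL}(\pi^h(\cdot\mid s)\,\|\,\bar{\pi}^h(\cdot\mid s))$, and Pinsker's inequality converts the total-variation factors into KL; collecting the constants then yields the coefficient $\nu_j=\frac{4\gamma\max_{s,\bm{u}}|A_j^\pi(s,\bm{u})|}{(1-\gamma)^2}$, and keeping only the upper-bound side gives the claim.

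I expect the main obstacle to be purely the bookkeeping in this last step: reconciling a surrogate $L_j^\pi(\bar{\pi}^i)$ that involves only agent $i$'s new policy with a bound whose divergence term ranges over \emph{all} agents, and tracking the numerical constants through the coupling argument and Pinsker so that exactly $\nu_j$ (and nothing larger) appears. Once the joint cost advantage is treated as a generic bounded, $\pi$-centered signal, the remainder is a line-by-line adaptation of the individual-constraint argument of \citet{constrained_mappo}.
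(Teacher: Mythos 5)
Your overall skeleton (cost version of the performance--difference identity, coupling/total-variation bounds, Pinsker, additivity of KL across the per-agent factors of the joint policy, and the swap $\max_s\sum_h \le \sum_h\max_s$) contains exactly the conversion steps the paper uses at the end of its proof. The genuine gap is your step (b). After replacing $d_{\bar\pi}$ by $d_\pi$, you still must pass from $\mathbb{E}_{\bm{u}\sim\bar\pi}\left[A_j^\pi(s,\bm{u})\right]$ to $\mathbb{E}_{u^i\sim\bar\pi^i,\,\bm{u}^{-i}\sim\pi^{-i}}\left[A_j^\pi(s,\bm{u})\right]$, and the bound you propose for this mismatch is \emph{linear} in $\max_s D_{TV}\big(\bar\pi^{-i}(\cdot|s),\pi^{-i}(\cdot|s)\big)$. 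Pinsker-type inequalities convert a \emph{squared} total-variation quantity into a KL quantity ($D_{TV}(p,q)^2\le D_{KL}(p,q)$, the form used in the paper); they cannot turn a term that is linear in $D_{TV}$ into one of order $D_{KL}$, since for small policy changes $D_{TV}\sim\sqrt{D_{KL}}\gg D_{KL}$. So ``collecting the constants'' cannot make this term fit under $\nu_j\sum_h D_{KL}^{max}(\pi^h,\bar\pi^h)$; moreover its natural prefactor is $1/(1-\gamma)$ rather than $\gamma/(1-\gamma)^2$, so it does not match $\nu_j$ even at the level of constants. As written, your route does not reach the claimed inequality; you correctly flagged this reconciliation as ``the main obstacle,'' but the proposal does not close it.

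For comparison, the paper does not re-derive this step at all: it applies the trust-region bound of \citet{trpo} (their Eqs.\ (41)--(45)) to the joint policies with the surrogate $L_j^\pi(\bar\pi^i)$ already in place, exactly mirroring Lemma 4.3 of \citet{constrained_mappo} for the individual-constraint case, so the only divergence that ever appears is the squared one, $\alpha^2$ with $\alpha=D_{TV}^{max}(\pi,\bar\pi)$; what remains is just $D_{TV}^2\le D_{KL}$, KL additivity over the product policy, and the max/sum swap. If you want a self-contained argument along your lines, you must either handle the replacement of $\bar\pi^{-i}$ by $\pi^{-i}$ inside the coupling argument itself (so that it costs only a squared-divergence term), or work with the surrogate evaluated at the full joint $\bar\pi$, to which the TRPO bound applies verbatim, and then relate it to $L_j^\pi(\bar\pi^i)$ by a separate argument; treating it as a first-order TV error and invoking Pinsker is the step that fails.
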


\begin{proof} The proof closely follows the one provided by \citet{constrained_mappo} for the individual constraint case. From Equations (41)-(45) of \citet{trpo} applied to joint policies $\pi, \bar{\pi}$ and the surrogate cost objective defined in terms of the joint cost advantage functions (Lemma 1), we have that:

\[
J_j(\bar{\pi}) \leq J_j(\pi) + L_j^\pi(\bar{\pi}^i)+\frac{4\gamma\alpha^2\max_{s, \bm{u}}\vert A_j^\pi(s, \bm{u})|}{(1-\gamma)^2} = J_j(\pi) + \mathbb{E}_{\bm{u}^{-i}\sim \pi^{-i}, u^i \sim \bar{\pi}^i}\left[A_j^\pi(s, \bm{u})\right] \frac{4\gamma\alpha^2\max_{s, \bm{u}}\vert A_j^\pi(s, \bm{u})|}{(1-\gamma)^2},
\]

\[
\text{where}\quad \alpha = D_{TV}^{\text{max}}(\pi, \bar{\pi}) = \max_s D_{TV}(\pi(\cdot|s), \bar{\pi}(\cdot|s)).
\]

\noindent From \citet{trpo}, we also know $D_{TV}(p, q)^2 \leq D_{KL}(p, q)$, modifying the above inequality in the following:

\[
J_j(\bar{\pi}) \leq J_j(\pi) + L_j^\pi(\bar{\pi}^i)+\frac{4\gamma\alpha^2\max_{s, \bm{u}}\vert A_j^\pi(s, \bm{u})|}{(1-\gamma)^2} = J_j(\pi) + \mathbb{E}_{\bm{u}^{-i}\sim \pi^{-i}, u^i \sim \bar{\pi}^i}\left[A_j^\pi(s, \bm{u})\right] \frac{4\gamma\max_{s, \bm{u}}\vert A_j^\pi(s, \bm{u})|}{(1-\gamma)^2}D_{KL}^{\text{max}}(\pi, \bar{\pi}),
\]

\noindent Finally, given that the following holds:

\begin{equation*}
\begin{split}
    D_{KL}^{\text{max}} &= \max_s D_{KL}(\pi(\cdot|s), \bar{\pi}(\cdot|s)) \\
    &= \max_s \left(~\sum_{h = 1}^{|\mathcal{N}|} D_{KL}\left(\pi^h(\cdot|s), \bar{\pi}^h(\cdot|s)\right)\right) \\
    &\leq \sum_{h = 1}^{|\mathcal{N}|} \max_s D_{KL}\left(\pi^h(\cdot|s), \bar{\pi}^h(\cdot|s)\right) \\
    &= \sum_{h = 1}^{|\mathcal{N}|} D_{KL}(\pi^h, \bar{\pi}^h),
\end{split}
\end{equation*}

\noindent we obtain our results:
\begin{equation*}
        J_j(\bar{\pi}) \leq J_j(\pi) + L_j^\pi(\bar{\pi}^i)+\nu_j\sum_{h = 1}^{|\mathcal{N}|} D_{KL}^{max}(\pi^h, \bar{\pi}^h),\qquad
        \text{where}~ \nu_j = \frac{4\gamma\max_{s, \bm{u}}\vert A_j^\pi(s, \bm{u})|}{(1-\gamma)^2}.
    \end{equation*}

\end{proof}

\end{document}